\documentclass{llncs}

\usepackage{microtype}


\bibliographystyle{plainurl}

\title{Computing Euclidean k-Center over Sliding Windows\thanks{This work has been supported by DFG grant Kl 655/19 as part of a DACH project}}


\author
{Sang-Sub Kim}
\institute{Department of Computer Science, University of Bonn, Germany\\
\email{sangsub@uni-bonn.de    }}

\authorrunning{S. Kim}

\usepackage[normalem]{ulem}
\usepackage[utf8]{inputenc}

\usepackage{mathtools} 
\usepackage{wrapfig} 

\usepackage{amsmath}
\usepackage{algorithm}
\usepackage[noend]{algpseudocode}
\usepackage{multicol}

\usepackage{xcolor}
\usepackage{xspace}
\usepackage{fmtcount} 
\usepackage{xstring}
\usepackage{cite}

\usepackage{nicefrac}

\usepackage{bookmark}	

\newtheorem{invariant}{Invariant}

\newcommand{\rst}{r^*}
\newcommand{\eps}{\epsilon}

\usepackage{caption} 

\begin{document}
\maketitle

\begin{abstract}

In the Euclidean $k$-center problem in sliding window model, input points are given in a data stream and the goal is to find the $k$ smallest congruent balls whose union covers the $N$ most recent points of the stream. In this model, input points are allowed to be examined only once and the amount of space that can be used to store relative information is limited.

Cohen-Addad et al.~\cite{cohen-2016} gave a $(6+\epsilon)$-approximation for the metric $k$-center problem using O($k/\epsilon \log \alpha$) points, where $\alpha$ is the ratio of the largest and smallest distance and is assumed to be known in advance.

In this paper, we present a $(3+\eps)$-approximation algorithm for the Euclidean $1$-center problem using O($1/\eps \log \alpha$) points. We present an algorithm for the Euclidean $k$-center problem that maintains a coreset of size $O(k)$.
Our algorithm gives a $(c+2\sqrt{3} + \epsilon)$-approximation for the Euclidean $k$-center problem using O($k/\epsilon \log \alpha$) points by using any given $c$-approximation for the coreset where $c$ is a positive real number.
For example, by using the $2$-approximation~\cite{feder-greene-1988} of the coreset, our algorithm gives a $(2+2\sqrt{3} + \eps)$-approximation ($\approx 5.465$) using $O(k\log k)$ time.
This is an improvement over the approximation factor of $(6+\epsilon)$ by Cohen-Addad et al.~\cite{cohen-2016} with the same space complexity and smaller update time per point.
Moreover we remove the assumption that $\alpha$ is known in advance. Our idea can be adapted to the metric diameter problem and the metric $k$-center problem to remove the assumption. 
For low dimensional Euclidean space, we give an approximation algorithm that guarantees an even better approximation.

\keywords{Euclidean k-Center \and Sliding Windows \and Streaming Algorithm}
\end{abstract}

\section{Introduction and problem statement}

The k-center problem, which is finding the $k$ smallest congruent balls containing a set of input points, is a fundamental problem arising from abundant real-world applications, including machine learning, data mining, and image processing. The growth of the Internet and the computing power of machines has facilitated a significant increase in the amount of data collected and used by various applications over the last decades. But in huge data sets it is quite hard to guarantee reasonable processing time and memory space. To cope with this difficulty, data stream models have received considerable attention in the theoretical as well as the application field.

In the streaming model, it is important to design an algorithm whose space complexity does not depend on the size of the input, since the memory size is typically much smaller than the input size. In this paper, we consider the single-pass streaming model~\cite{chan-pathak-2014}, where elements in the data stream are allowed to be examined only once and only a limited amount of information can be stored. The insertion-only stream model is well studied, but also more flexible settings like dynamic streams and the sliding window model have received some attention for many clustering problems. In the dynamic stream model input points can be removed arbitrarily and in the sliding window model older input is deleted as new elements arrive.


In this paper, we consider the Euclidean $k$-center problem for a sliding window, which contains the most recent $N$ points.

\subsection{Previous Work}

\textbf{The $k$-center problem in the static setting}.
The Euclidean $k$-center problem has been extensively studied in the literature.
If $k$ is part of the input, the $k$-center problem is $\mathrm{NP}$-hard~\cite{garey-johnson-1979}, even in the plane~\cite{megiddo-supowit-1984}.
In fact, it is known to be $\mathrm{NP}$-hard to approximate the $k$-center problem with a factor smaller than $2$ for arbitrary metric spaces~\cite{fowler-paterson-1981}, and with a factor smaller than $1.822$ for the Euclidean space~\cite{bern-eppstein-1996}.
If the Euclidean dimension $d$ is not fixed, the problem is $\mathrm{NP}$-hard for fixed $k \ge 2$~\cite{megiddo-1990}. Agarwal and Procopic~\cite{agarwal-procopiuc-2002} gave an exact algorithm that runs in O$(n^{k^{1-(1/d)}}$) for the $L_2$-metric and the $L_{\infty}$-metric.
Feder and Greene~\cite{feder-greene-1988} gave a $2$-approximation that runs in O($n \log k$) for any $L_p$-metric.









For small $k$ and $d$, better solutions are available. The $1$-center problem in fixed dimensions is known to be of LP-type and can be solved in linear time~\cite{chazelle-matousek-2015}.
For the Euclidean $2$-center problem in the plane, the best known algorithm is given by Chan~\cite{chan-1999}, which runs deterministically in O($n \log^2 n \log^2 \log n$) time using O($n$) space. \\
\textbf{The $k$-center problem in the streaming model}.
In the streaming model, where only a single pass over the input is allowed, McCutchen and Khuller~\cite{mccutchen-khuller-2008}, and independently Guha~\cite{guha-2009}, presented algorithms to maintain a $(2 + \eps)$-approximation for $k$-centers in any dimension using O($(kd/\eps) \log(1/\eps)$) space and O($(kd/\eps) \log(1/\eps)$) update time. For $k = 1$, Zarrabi-Zadeh and Chan~\cite{zarrabi-chan-2006} presented a simple algorithm achieving an approximation factor of $3/2$ using only O($d$) space. Agarwal and Sharathkumar~\cite{agarwal-shara-2010} improved the approximation factor to $(1 + \sqrt{3})/2 + \eps \approx 1.37$ using O($(d(1/\eps)^3) \log(1/\eps)$) space and O($(d(1/\eps)^2) \log(1/\eps)$) update time. The approximation factor of their algorithm was later improved to $1.22$ by Chan and Pathak~\cite{chan-pathak-2014}. For $k = 2$, Kim and Ahn~\cite{kim-ahn-2015} gave a $(1.8 + \eps)$-approximation using O($d/\eps$) space and O($d/\eps$) update time. Their algorithm extends to any fixed $k$ with the same approximation factor.
If $d$ and $k$ are fixed, Zarrabi-Zadeh~\cite{zarrabi-2008} gave an $(1+\eps)$-approximation using O($k(1/\eps)^d$) space and O($k(1/\eps)^d$) update time. \\
In the sliding window model, Cohen-Addad et al.~\cite{cohen-2016} have recently obtained results for a variant of the $k$-center problem that returns $k$ centers, but not the radius.
Under the assumption that the algorithm cannot create new points, they show that any randomized sliding window algorithm achieving
an approximation factor less than 4 with probability greater than $1/2$
for metric $2$-center problem requires $\Omega(\sqrt[3]{N})$ points.
For $k=2$, they give a $(4+\eps)$-approximation using O($1/\eps \log \alpha)$ points and O($1/\eps \log \alpha)$ update time per point where $\alpha$ is the ratio of the largest and smallest distance and is assumed to be known in advance. For general $k$, they gave a $(6+\eps)$-approximation using O($k/\eps \log \alpha)$ points and O($k^2/\eps \log \alpha)$ points. \\
\textbf{Problems in the sliding window model}.
Many problems have been studied in the literature~\cite{braverman-ostrovsky-2010,braverman-2016}.
Several algorithms have been proposed for the diameter problem~\cite{feigenbaum-2004,chan-Sadjad-2006,cohen-2016}. Chan and Sadjad~\cite{chan-Sadjad-2006} give a $(1+\epsilon)$-approximation using $O((1/\epsilon)^{(d+1)/2}\log \alpha / \epsilon)$ points. For higher dimensions, Cohen-Addad et al.~\cite{cohen-2016} give a $(3+\epsilon)$-approximation using $O(1/\epsilon \log \alpha)$ points and $O(1/\epsilon \log \alpha)$ update time per point.








\subsection{Detailed Comparison with~\cite{cohen-2016}}

For the $1$-center problem, when the value of $\alpha$ is known, our algorithm is a modification of the diameter algorithm of Cohen-Addad et al.~\cite{cohen-2016}. In addition to the center point returned by the algorithm of Cohen Addad, we maintain a second center point, such that all alive points are contained in two small balls centered at these two points. This way we obtain a better approximation factor. For the case of unknown $\alpha$, the number of our solutions is bounded, thereby removing the assumption that the value of $\alpha$ is known in advance. We show that it is sufficient to maintain O($1/\epsilon \log \alpha$) solutions.



A coreset is a small portion of the data, such that running a clustering algorithm on it, generates a clustering which is approximately optimal for the whole data.
For the $k$-center problem, we find a coreset of size $O(k)$ which guarantees a $(c+2\sqrt{3}+\epsilon)$-approximation by using a $c$-approximation for the coreset.
We use the observation that no three points within the Euclidean unit ball have all pairwise distances of more than $\sqrt{3}$.

\subsection{Our Contribution}


For the $1$-center problem, we obtain a $(3+\epsilon)$-approximation that works without knowing the parameter $\alpha$ in advance by adding a carefully chosen point. The parameter $\alpha$ is the ratio of largest and smallest possible distance between the points. Our algorithm maintains O($1/\eps \log \alpha$) points and O($1/\eps \log \alpha$) update time per point. We also remove the assumption that $\alpha$ is known in advance. Because $\alpha$ is changed when the sliding window moves, finding a proper estimate of $\alpha$ is difficult. Therefore it is important to remove the assumption for implementing an algorithm in the streaming model.
Our idea is general enough to be adapted to the algorithms of Cohen-Addad et al. for the metric diameter problem and the metric $k$-center problem.

In the static model, finding a $2$-approximation for the $1$-center problem is known and easy, however, in the streaming model finding a $\textit{feasible}$ radius is difficult because we do not know all the points. Therefore our result is non-trivial.

For the $k$-center problem, our algorithm finds a coreset of size $O(k)$ using O($k/\eps \log \alpha$) points and O($k \log k + k/\eps \log \alpha$) update time per point. 
Our algorithm gives a $(c+2\sqrt{3} + \epsilon)$-approximation for the Euclidean $k$-center problem by using any given $c$-approximation for the coreset where $c$ is a positive real number.
By using the exact algorithm~\cite{agarwal-procopiuc-2002} for the coreset, our algorithm gives a $(1+2\sqrt{3} + \eps)$-approximation  ($\approx 4.465$) using O$(k^{k^{1-(1/d)}}$) time per point.
By using the $2$-approximation~\cite{feder-greene-1988} for the coreset, our algorithm gives a $(2+2\sqrt{3} + \eps)$-approximation ($\approx 5.465$) using $O(k\log k)$ time per point. Our two algorithms for the $k$-center problem are an improvement on the approximation factor of $(6+\eps)$ by Cohen-Addad et al.~\cite{cohen-2016} with the same space complexity.

For low dimensional Euclidean space, better approximation is available.
Our algorithm finds a coreset of size $O(k)$ maintaining O($k2^{c_d t} \log(\alpha)/\epsilon$) points and O$(k \log k + k2^{c_d d t} \log(\alpha)/\epsilon$) update time per point where $t$ is an trade-off parameter, which is a positive integer, and $c_d$ is the doubling constant. 
We give a $(c+2\sqrt{3}(\frac{1}{2})^t+\epsilon)$-approximation by using any given $c$-approximation for the coreset. We can get a $(1+\epsilon)$-approximation by using this result.

\section{Preliminaries}
\label{section:preliminaries}

Let $P$ be a set of $n$ points in $d$-dimensional Euclidean space. In the sliding window model, the points in $P$ arrive one by one, and are allowed to be examined only once. The points in $P$ are labeled in order of their arrival. That is, $p_i$ is a point in $P$ that arrives in the $i$-th step. We denote by $P_N$ a subset of points in $P$ that are the last $N$ points, that is, $P_N = \{ p_{n-N+1}, p_{n-N+2}, \ldots, p_n \}$.
We call a point $p$ \textit{alive} if $p \in P_N$.
Let $\textit{Time}(p)$ be the index of $p$ in the insertion order. ($\textit{Time}(p_i) = i$.) Let $\textit{Dia}(P)$ be the diameter of the points set $P$. Let $\textit{CP}(P)$ be the closest pair of the points set $P$. Let $\alpha$ be the ratio between the diameter and the minimum non-zero distance between any two points in $P_N$.
Let $|xy|$ be the Euclidean distance between $x$ and $y$. Let $B(c,r)$ denote a ball of radius $r$ centered at $c$. Let $r^*$ be the radius of optimal solution.
\section{The 1-Center Problem}
\label{section:1-center}

\begin{algorithm}
\caption{\textit{Diameter}($\gamma$) }\label{alg:diameter}
\begin{multicols}{2}
\begin{algorithmic}[1]

\State $c_{old},q,r,b \gets$ first point of the stream
\State $c_{new} \gets$ null
\For {all elements $p$ of the stream}
\If {$c_{old}$ is deleted}
\If {$c_{new} \neq $null and $c_{old} = q$}
\State $c_{old} \gets r$; $b \gets c_{new}$;
\State $c_{new} \gets $ null; 
\EndIf
\If {$c_{new} \neq $null and $c_{old} \neq q$}
\State $b \gets c_{old}$; $c_{old} \gets q$;
\State $c_{new} \gets $ null;
\EndIf
\If {$c_{new} = $null}
\State $b \gets c_{old}$; $c_{old} \gets r$;
\EndIf
\EndIf
\State INSERT($p$)
\State $r \gets p$
\EndFor

\Procedure{INSERT}{p}
\If {$c_{new} =$ null}

\If {$|pr| > \gamma$}
\State $c_{old}$,$q \gets r$; $c_{new} \gets p$;
\ElsIf {$|pc_{old}| > \gamma$}
\State $q \gets r$; $c_{new} \gets p$;
\EndIf
\Else
\If {$|pr| > \gamma$}
\State $c_{old}$,$q \gets r$; $c_{new} \gets p$;

\ElsIf {$|pc_{new}| > \gamma$}
\State $c_{old} \gets c_{new}$; $q \gets r$; $c_{new} \gets p$;
\ElsIf {$|pq| > \gamma$}
\If {$c_{old} \neq q$}
\State $c_{old} \gets q$; $q \gets r$; $c_{new} \gets p$;
\EndIf
\EndIf
\EndIf

\EndProcedure
\end{algorithmic}
\end{multicols}
\end{algorithm}

First we give a $(6+\epsilon)$-approximation by using the algorithm for the diameter problem. The details of this can be found in the Appendix A.

Now we explain our $(3+\epsilon)$-approximation algorithm. Our idea in this section is to carefully adapt Algorithm 1 of Cohen-Addad et al.~\cite{cohen-2016}, originally designed for the diameter problem. To improve readability, we sketch their algorithm and its properties to explain our modifications.
Their algorithm consists of two parts: a fixed parameter algorithm and a way to maintain parameters. We only explain the first part.

For a given estimate $\gamma$, their algorithm maintains four points $c_{old}$, $c_{new}$, $q$, and $r$.
Their algorithm returns either two points $c_{old}$ and $c_{new}$ ($\textit{Time}(c_{old}) < \textit{Time}(c_{new})$) such that $|c_{old}c_{new}| \ge \gamma$ or a point $c_{old}$. If their algorithm returns one point, then they show that $|Dia(P_N)| \le 3 \gamma$ in this case.

They show that the following two invariants are satisfied.
The invariants are:
\begin{invariant}
\label{inv:1}
If $c_{new} = null$ then the following statements hold: \\
a) For any alive points $x,y$ with $\textit{Time}(x), \textit{Time}(y) \le \textit{Time}(c_{old})$, we have $|xy| \le 2\gamma$.	\\
b) For any point $x$ with $\textit{Time}(x) > \textit{Time}(c_{old})$, we have $|xc_{old}| \le \gamma$.
\end{invariant}

\begin{invariant}
\label{inv:2}
If $c_{new} \neq null$, then the following statements hold:	\\
a) $|c_{old}c_{new}| > \gamma$.	\\
b) For any point $x$ with $\textit{Time}(c_{old}) < \textit{Time}(x) < \textit{Time}(c_{new})$, we have $|xc_{old}| \le \gamma$.	\\
c) For any point $x$ with $\textit{Time}(c_{new}) < \textit{Time}(x)$, we have $|xc_{new}| \le \gamma$.	\\
d) If $c_{old} \neq q$ then for any point $x$ with $\textit{Time}(q) < \textit{Time}(x)$, we have $|xq| \le \gamma$.
\end{invariant}

Now we explain our algorithm. Our algorithm consists of two parts: a fixed parameter algorithm and a way to maintain parameters.

Given an estimate $\gamma$, we use their fixed parameter algorithm with maintaining  a new point. 
We call our algorithm $Diameter(\gamma)$.
We maintain a \textit{bridge} point and following an invariant show the property of the point. The main idea is that two balls $B(c_{old},\gamma)$ and $B(b,\gamma)$ contain all alive point if our algorithm returns a point $c_{old}$ where $b$ is a bridge point. See Figure~\ref{fig:onecenter} (b).
This invariant is a variant of the Invariant 1 and is given implicitly in the proofs of their Lemma 3 and Lemma 4~\cite{cohen-2016}. The proof of this Invariant can be found in the Appendix B.

\begin{lemma}[Invariant 3]
\label{inv:3}
If $c_{new} = null$ then the following statement holds:	\\
There exists a \textit{bridge} point $b$ such that $|xb| \le \gamma$ for any alive point $x$ with $\textit{Time}(x) \le \textit{Time}(c_{old})$.
\end{lemma}

By Invariant 1 and 3, we can find a solution. The proof of this Lemma can be found in the Appendix B.

\begin{lemma}
\label{small_ball}
If $c_{new} = null$ then $B(m,1.5\gamma)$ contains $P_N$ where $m = \frac{c_{old}+b}{2}$.
\end{lemma}

Now we explain the way to maintain estimates. First we explain the case when the number of estimate is unbounded and then we explain the case when the number of estimate is bounded.

We maintain estimates, which is an exponential sequence to the base of ($1+\epsilon$), such that any value between the distance of the closest pair and the distance of the diameter is ($1+\epsilon$) approximated. For each power of ($1+\epsilon$), we run $\textit{Diameter}(\gamma)$.

\begin{figure}[ht]
	\centering
	\includegraphics[width=0.6\textwidth]{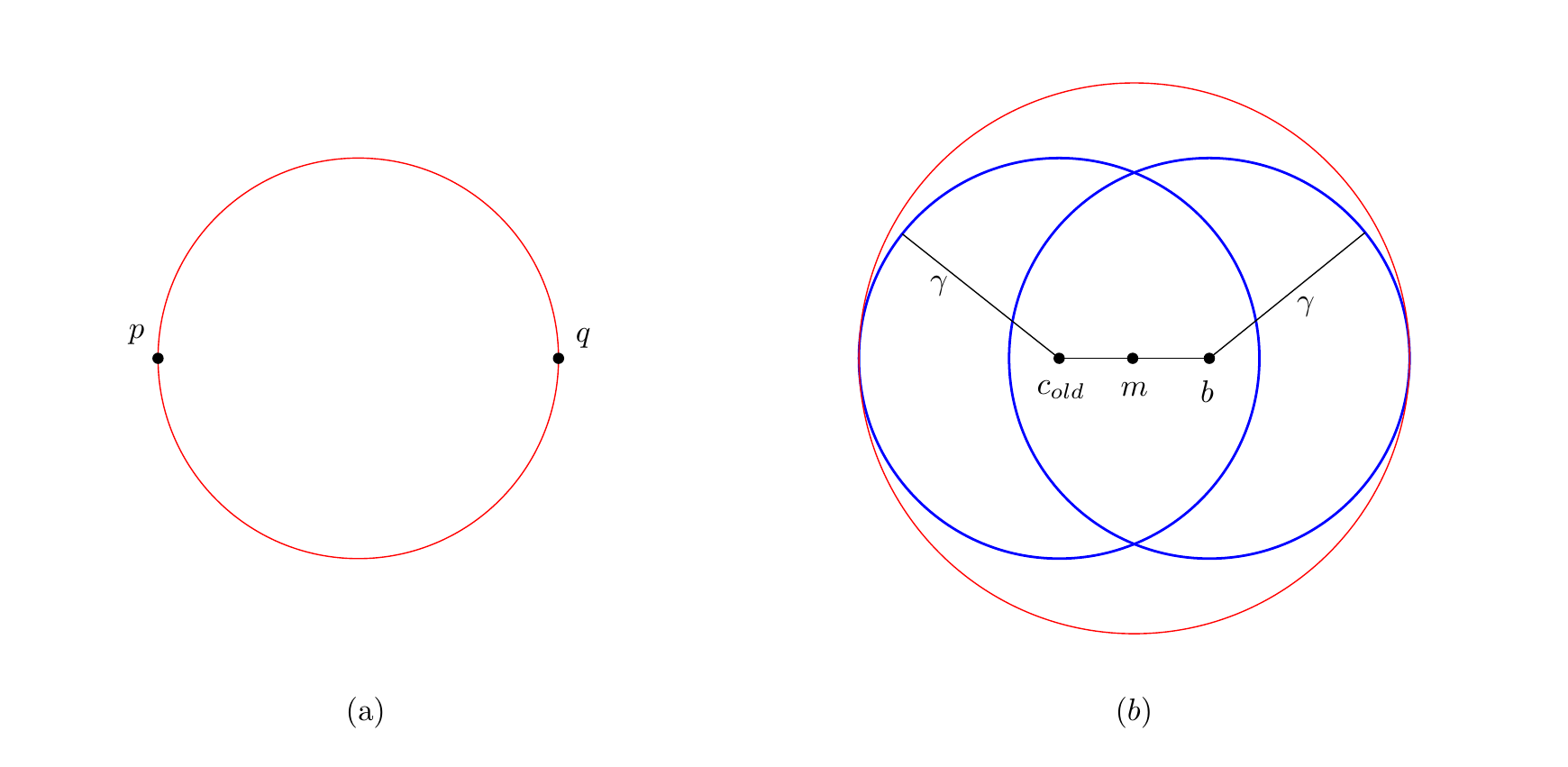}
	\caption{(a) $p$ and $q$ are diameter and $|pq| = 2 r^*$. Therefore $3|pq| \le 6 r^*$ (b) $B(m,1.5\gamma)$ contains $P_N$}
	\label{fig:onecenter}
\end{figure}

Now we explain the way to bound the number of estimates.
Let $\Upsilon$ be an estimates set containing $(1+\eps)^i $ for all integer $i$.
We modify some solutions of our algorithm such that the new solutions also satisfy all invariants and maintain the specific points.
Let $\gamma_L$ ($\gamma_U$) be an estimate in $\Upsilon$ such that, for any estimate $\gamma  \le \gamma_L$ ($\gamma \ge \gamma_U$),  $Diameter(\gamma)$ and $Diameter(\gamma_L)$ ($Diameter(\gamma_U)$) maintain the same points and the same solution.
By this property, we know all the solutions in $\Upsilon$ by just maintaining estimates between $\gamma_L$ and $\gamma_U$.

To find proper $\gamma_L$ and $\gamma_U$, we use the following witnesses. The proof of this Lemma can be found in the Appendix B.

\begin{lemma}
	If $\gamma \le |p_{n-1}p_n|$, then $\textit{Diameter}(\gamma)$ maintains $c_{old},r,q \gets p_{n-1}$ and $c_{new} \gets p_n$. It returns $c_{old}$ and $c_{new}$ as a solution.
	If $\gamma > |\textit{dia}(P_N)|$, then we can maintain $c_{old}$, $r$, $q 
\gets p_{n-1}$, $c_{new} \gets null$ , and $b \gets p_{n-1}$. We return $c_{old}$ as a solution. All the invariants hold for the solution.
\end{lemma}

We set the largest estimate satisfying $\gamma < |p_{n-1}p_n|$ as $\gamma_L$. Let $\gamma_{e}$ be the length of our solution. Because $\gamma_e \ge |Dia(P_N)|/3$ by Invariant 1, we set the smallest estimate satisfying $\gamma \ge 3\gamma_{e}$ as $\gamma_U$.

When a constant number of points are inserted, we maintain all estimates between $\gamma_L$ and $\gamma_U$ by using evidences for each direction.
Then we update $\gamma_L$ and $\gamma_U$. We first update the direction decreasing the number of estimates if it is possible.
Because $\gamma_L$ is larger than $|CP(P_N)|$ and $\gamma_U$ is smaller than $3|\textit{Dia}(P_N)|$, the number of estimates is $O(\log_{1+\eps}\alpha) = O(\frac{\log \alpha}{\log (1+\eps)}) = O(  \log (\alpha)/\eps)$.

Among estimates, we choose the smallest estimate $\gamma$ that $\textit{Diameter}(\gamma)$ returns a point $c_{old}$.
Note that $\textit{Diameter}(\frac{\gamma}{1+\epsilon})$ returns two points and its means $\frac{\gamma}{1+\epsilon} \le |\textit{Dia}(P_N)|$. Since $|\textit{Dia}(P_N)| \le 2\rst$, $\gamma \le 2(1+\eps)\rst$. Therefore our solution $B(m,1,5\gamma)$ guarantees $(3+3\eps)$-approximation. See Figure~\ref{fig:onecenter} (b).

Note that $\textit{Diameter}(\gamma)$ takes a constant time to update.

Therefore we obtain the following Theorem.

\begin{theorem}
\label{thm:1-center}
	Given a set of points $P$ with a window of size $N$ and a value $\epsilon > 0$, our algorithm guarantees a $(3+\eps)$-approximation to the Euclidean $1$-center problem in the sliding window model maintaining O$(\log(\alpha)/\epsilon)$ points and requiring update time per point in arbitrary dimensions $d$.
\end{theorem}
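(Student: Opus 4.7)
The plan is to decompose the theorem into three independent claims: (i) the approximation factor, (ii) the space bound, and (iii) the update-time bound, each reducing to machinery already established in the preceding lemmas. The proof will pick up where the informal discussion left off and tie the individual pieces together.

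For the approximation factor, I would select $\gamma \in \Upsilon$ as the smallest estimate for which $\textit{Diameter}(\gamma)$ returns a single point $c_{old}$, and consider the next smaller estimate $\gamma/(1+\epsilon)$, which by choice returns two points. Invariant~\ref{inv:2}(a) then gives $|c_{old} c_{new}| > \gamma/(1+\epsilon)$, and since these are alive points we have $|\textit{Dia}(P_N)| \ge \gamma/(1+\epsilon)$. Combined with $|\textit{Dia}(P_N)| \le 2\rst$, this yields $\gamma \le 2(1+\epsilon)\rst$. Lemma~\ref{small_ball} then places $P_N$ inside $B(m, 1.5\gamma)$, whose radius is at most $3(1+\epsilon)\rst$; rescaling $\epsilon$ at the outset gives the claimed $(3+\epsilon)$-approximation.

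For the space bound, I would argue that the witnesses identified in the lemma about $|p_{n-1}p_n|$ pin down both endpoints of the maintained range of estimates. Any $\gamma \le |p_{n-1}p_n|$ forces $\textit{Diameter}(\gamma)$ into the same configuration returning two points, so one representative suffices for the lower tail, giving $\gamma_L \ge |CP(P_N)|$ up to a factor of $(1+\epsilon)$. Symmetrically any $\gamma > |\textit{Dia}(P_N)|$ collapses to the same single-point configuration, and by the first paragraph the smallest estimate we need keep is within a $(1+\epsilon)$-factor of $|\textit{Dia}(P_N)|$, so $\gamma_U \le O(|\textit{Dia}(P_N)|)$. The number of active estimates is therefore
\[
O\!\left(\log_{1+\epsilon}\!\left(\gamma_U/\gamma_L\right)\right) \;=\; O\!\left(\log_{1+\epsilon} \alpha\right) \;=\; O\!\left(\log(\alpha)/\epsilon\right),
\]
and each copy of $\textit{Diameter}(\gamma)$ stores only the five points $c_{old}, c_{new}, q, r, b$, giving total space $O(\log(\alpha)/\epsilon)$.

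For the update time, I would observe by inspection of Algorithm~\ref{alg:diameter} that the INSERT procedure and the outer expiration handling each perform a constant number of distance comparisons and pointer assignments, so each $\textit{Diameter}(\gamma)$ runs in $O(1)$ time per new point. Running one instance per active estimate yields $O(\log(\alpha)/\epsilon)$ time per insertion; maintaining the endpoints $\gamma_L, \gamma_U$ amortizes to $O(1)$ per step since we only move each endpoint in the direction indicated by the witnesses described above. The main obstacle I expect is not any single inequality but the bookkeeping for $\gamma_L$ and $\gamma_U$ as the window slides: I must check that when $c_{old}$ or the witness $p_{n-1}$ expires, the endpoints can be shifted without rescanning $\Upsilon$, and that the family of stored instances remains consistent with the invariants across all branches of the INSERT logic. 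Once this is verified case by case, the three bounds combine to yield Theorem~\ref{thm:1-center}.
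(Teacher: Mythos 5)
Your proposal is correct and follows essentially the same route as the paper: choose the smallest estimate $\gamma$ for which $\textit{Diameter}(\gamma)$ returns a single point, use the fact that $\gamma/(1+\epsilon)$ returns two points to deduce $\gamma \le 2(1+\epsilon)\rst$, apply Lemma~\ref{small_ball} to get the radius $1.5\gamma \le 3(1+\epsilon)\rst$, and bound the number of live estimates between $\gamma_L$ and $\gamma_U$ by $O(\log_{1+\epsilon}\alpha)$ with $O(1)$ work per instance. Your explicit appeal to Invariant~\ref{inv:2}(a) to justify $|\textit{Dia}(P_N)| \ge \gamma/(1+\epsilon)$ is a detail the paper leaves implicit, but the argument is the same.
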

\section{The k-Center Problem}
\label{section:k-center}

Our algorithm is similar to Algorithm 2 of Cohen-Addad et al.~\cite{cohen-2016}. The main difference is that we implicitly maintain O($k$) balls of radius $\sqrt{3}r$ by observing a property of the Euclidean ball.
Our algorithm maintains a coreset and does not compute a solution for each update. When a query is given, our algorithm computes a solution from the coreset. 
Our algorithm consists of two parts: a fixed parameter algorithm and a way to maintain parameters.

We call our algorithm \textit{4kCoreSet}$(r)$. For a given an estimate optimal radius $r$, our algorithm gives a coreset $C$ of at most $4k$ points such that, for any alive point $x$ and its nearest point $y$ in $C$, $|xy| \le 2\sqrt{3}r$.
If our algorithm maintains $C$ satisfying the condition, we call the coreset \textit{feasible}. Otherwise we call the coreset \textit{infeasible}.
If $r \ge \rst$ our algorithm maintains a \textit{feasible} coreset. Otherwise our algorithm may maintain a feasible coreset or an infeasible coreset.
If $\rst < r \le (1+\eps) \rst$, then we can compute an $(c+2\sqrt{3} + \eps)$-approximation by computing $c$-approximation of $k$-center problem for the O($k$) points.
For small $k$, we compute an $(1+2\sqrt{3} + \eps)$-approximation solution by computing an optimal solution for the points.
For large $k$, we compute an $(2+2\sqrt{3} + \eps)$-approximation solution by computing a $2$-approximation solution for the points.
We maintain estimates and one of them satisfies the condition.


\begin{figure}[ht]
	\centering
	\includegraphics[width=0.6\textwidth]{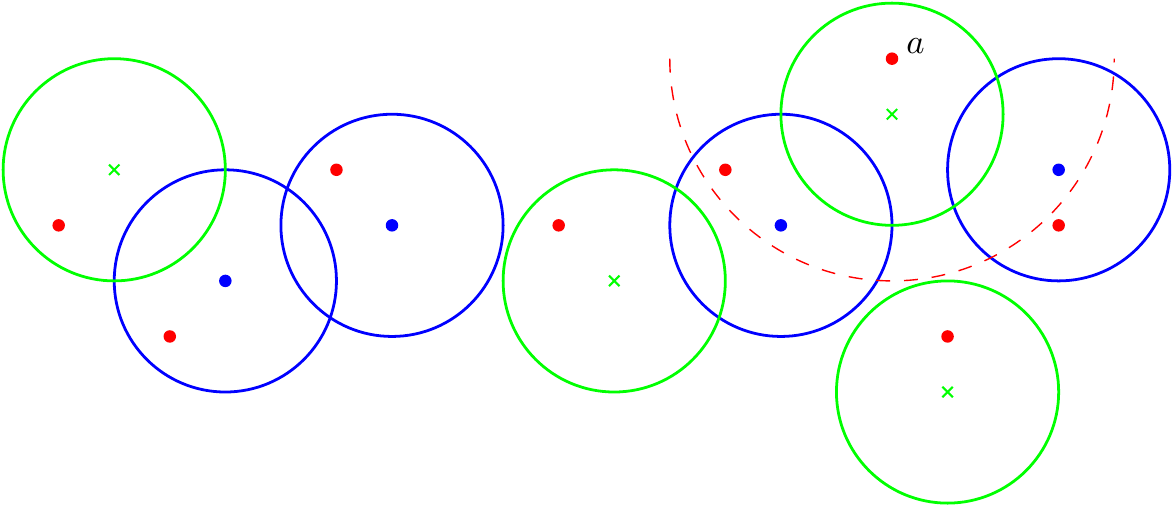}
	\caption{An example for the $2$-center problem . The centers of balls are active center points and raidus is $\sqrt{3}r$. The centers of blue balls are in $A$ and the centers of green balls are not in $A$. The red points are the representative points. The dashed red ball is $B(a,2\sqrt{3}r)$ and it contains the green ball whose representative point is $a$.}
	\label{fig:4kdiscrete}
\end{figure}

A high level description of our algorithm is as follows. We implicitly maintain at most $4k$ balls with radius $\sqrt{3}r$ such that the balls contain all points in $P_N$. We maintain a representative point per each ball and return the representative points as our coreset. In Figure~\ref{fig:4kdiscrete}, we give an example of our solution for the $2$-center problem.

We explain our algorithm more precisely.
We maintain a set $A$ of at most $2k$ active center points. For each active center point $a$, we maintain a representative point $R(a)$ within radius $\sqrt{3}r$. We maintain a set $R$ for the representative points.
When new point $p$ is arrived, we first remove points in $R$ and $A$ that are deleted. Note that the representative point of a deleted active center point can be in $R$.
Then we choose all active center points from $A$ such that the distance from $p$ is at most $\sqrt{3}r$, and if there are points we update corresponding representative points to $p$. Otherwise we add $p$ to $A$.
If $|A| = 2k+1$, we remove oldest active center point $a_{old}$. In this case, we set this coreset as \textit{infeasible} until $a_{old}$ is deleted and delete all representative points equal or older than the removed active center point.
We do this process to bound the number of points we maintain.
To decide whether our coreset is \textit{infeasible} or not, we maintain the \textit{feasible} time $\textit{FT}$ and the current time $\textit{CT}$.
We set $\textit{FT}=\textit{Time}(c_{old}+N)$ and $\textit{CT} = \textit{Time}(p)$.
If $\textit{FT} > \textit{CT}$, then this solution is \textit{infeasible}.

If our coreset is \textit{feasible}, then all points in $P_N$ are in the union of at most $4k$ balls we maintain implicitly.
Moreover, each ball contains a representative point in $R$.
Let $\mathcal{B} = \{ B_1, B_2, \ldots, B_k \}$ be a set of congruent balls such that $R$ is contained in the union of balls in $\mathcal{B}$. We enlarge the balls by $2\sqrt{3}r$, then the enlarged balls contain all points in $P_N$. 

\begin{algorithm}
\caption{$4k$CoreSet($r$) }\label{alg:k-center}
\begin{multicols}{2}
\begin{algorithmic}[1]
\State $A, R \gets \emptyset$
\State $\textit{CT}, \textit{FT} \gets 0$
\For {all element $p$ of the stream}
\State $\textit{CT} \gets \textit{Time}(p)$
\If	{$x \in R$ is deleted}
\State $R \gets R \setminus \{x\}$
\EndIf
\If	{$a \in A$ is deleted}
\State DeleteActive(a)
\EndIf
\State Insert(p)
\EndFor

\Procedure{DeleteActive}{a}
\If	{$a$ is not deleted}
\State $\textit{FT} \gets \textit{Time}(a) + N$
\For {$x \in R$}
\If {$\textit{Time}(x) \le \textit{Time}(a)$}
\State {$R \gets R \setminus \{  x \}$}
\EndIf
\EndFor
\EndIf
\State $A \gets A \setminus \{ a \}$
\EndProcedure

\Procedure{Insert}{p}
\State $D \gets \{ a \in A | |pa| \le \sqrt{3}r \}$
\If {$D = \emptyset$}
\If {$|A| = 2k$}
\State $a_{old} \gets$ oldest point in $A$ 
\State DeleteActive($a_{old}$)
\EndIf
\State $A \gets A \cup \{ p \}$
\State $R(p) \gets p$
\State $R \gets R \cup \{ R(p) \}$
\Else
\For {all $a\in D$}
\State $R(a) \gets p$
\EndFor
\EndIf
\EndProcedure
\end{algorithmic}
\end{multicols}
\end{algorithm}

We start our analysis by giving the space bound.

\begin{lemma}
\label{lem:store_k}
	$4k$CoreSet($r$) uses O($k$) space and $O(k)$ update time per point.
\end{lemma}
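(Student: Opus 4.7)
The plan is to prove the space and update-time bounds by analyzing the two stored sets $A$ and $R$ separately, and then going through the per-point cost of each block of pseudocode in Algorithm~\ref{alg:k-center}. Since everything the algorithm stores hangs off either an active center or its representative, bounding $|A|$ is the first order of business; $|R|$ and the time bound then follow from fairly direct bookkeeping.

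First I would argue that $|A| \le 2k$ holds as an invariant. Initially $A=\emptyset$, and the only place where $A$ grows is the branch $D=\emptyset$ of \textsc{Insert}, which guarantees $|A|=2k$ is immediately preceded by a call \textsc{DeleteActive}$(a_{old})$ that strictly decreases $|A|$, so after the ensuing $A \gets A\cup\{p\}$ we still have $|A|\le 2k$. Next I would bound $|R|$: a point enters $R$ only in the same branch that inserts a new active center, so each point of $R$ is the representative of a (currently) active center. More precisely, the representatives of the active centers that get removed in \textsc{DeleteActive} are exactly those of time $\le \text{Time}(a)$, and since $R(a') \ge \text{Time}(a')$ always (we only update $R(a')$ to a later point), representatives of still-active centers are not accidentally removed. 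Hence $|R|\le|A|\le 2k$, giving the $O(k)$ space bound.

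For the update-time bound, I would walk through a single iteration of the outer loop. Checking whether a point of $R$ or $A$ has expired is done in $O(1)$ amortized time by keeping each set in FIFO order and inspecting only the oldest element. Computing $D=\{a\in A : |pa|\le\sqrt{3}r\}$ takes $O(|A|)=O(k)$ distance evaluations. If $D=\emptyset$, inserting $p$ into $A$ (and possibly calling \textsc{DeleteActive} on $a_{old}$) costs $O(|R|)=O(k)$ because of the scan that discards representatives older than $a_{old}$; otherwise, updating $R(a)$ for all $a\in D$ is $O(|D|)=O(k)$. Summing over the constantly many sub-operations gives $O(k)$ per inserted point, which is the claim.

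The only place that requires a little care is the scan in \textsc{DeleteActive}, which in isolation could look like it depends on past history rather than on $k$. Here I would use the $|R|\le 2k$ invariant just established: each invocation of \textsc{DeleteActive} scans $R$, whose size is at most $2k$, so the cost is $O(k)$ in the worst case regardless of how many representatives happen to be evicted. With this observation in hand, all remaining work in the proof is routine counting.
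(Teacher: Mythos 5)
Your time analysis and your bound $|A|\le 2k$ are fine, but the space argument has a genuine gap: the claim that every point of $R$ is the representative of a \emph{currently} active center, and hence $|R|\le|A|\le 2k$, is false for this algorithm. Look at what happens when a center $a$ leaves $A$. If $a$ is evicted because $|A|=2k$, the scan in \textsc{DeleteActive} removes only those points of $R$ with $\textit{Time}(x)\le\textit{Time}(a)$; but $R(a)$ itself is typically \emph{newer} than $a$ (it is repeatedly overwritten by later points within distance $\sqrt{3}r$ of $a$), so $R(a)$ survives in $R$ while $a$ is gone from $A$. If instead $a$ expires from the window, \textsc{DeleteActive} removes $a$ from $A$ without touching $R$ at all, and again $R(a)$ lingers until it expires on its own. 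The paper even flags this explicitly (``the representative point of a deleted active center point can be in $R$''). So $R$ does contain orphaned representatives, and your monotonicity observation ($\textit{Time}(R(a'))\ge\textit{Time}(a')$) only shows that live centers' representatives are not accidentally deleted --- it says nothing about how many dead centers' representatives accumulate.

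Bounding those orphans is the actual content of the paper's proof: since $R(a_j)$ stops being updated once $a_j$ leaves $A$, and $a_j$ leaves $A$ no later than when $a_{j+2k}$ is inserted, one gets $\textit{Time}(R(a_j))<\textit{Time}(a_{j+2k})$; combined with the cleanup in lines 13--15 (or window expiry) this shows that only the representatives of the $2k$ most recently evicted centers can remain, giving $|R|\le 4k$, not $2k$. The final $O(k)$ conclusion survives, but only because of this extra argument, which your proof is missing; as written, your reasoning would ``prove'' the false bound $|R|\le 2k$. The $O(k)$ update time per point is unaffected, since the \textsc{DeleteActive} scan is over a set of size $O(k)$ either way.
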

\begin{proof}
	The number of active center points in $A$ is at most $2k$ and the number of representative points corresponding the active center points is at most $2k$.
	What remains to be shown is that the number of representative points whose active center point is not in $A$ is at most $2k$. This result come from Lemma 7 of Cohen-Addad~\cite{cohen-2016}, but for the completeness we explain it.
	
	Let $a_i$ be $i$th active center point inserted in $A$.
	Note that after an active center point $a$ is removed from $A$, $R(a)$ is not updated. Therefore, $\textit{Time}(R(a_j)) < \textit{Time}(a_{j+2k})$ for all $j$.	
	Assume we currently have $A = \{a_i,\cdots ,a_{i-1+m} \}$ where $m \le 2k$. If $a_{i-1}$ is deleted, then $R(a_{j-1-2k})$ is also deleted for all $j \le i$.	
	 If $a_{i-1}$ is not deleted, then we removed it from $A$ and we also removed all representative points older than $a_{i-1}$ by line 13 to line 15 of $4k$CoreSet($r$). Therefore, we removed $R(a_j)$ from $R$ for all $j \le i-1 -2k$. 

By the above reason, $|A| \le 2k$ and $|R| \le 4k$, and the space bounds holds.

Now we bound time to update. Removing points in the main procedure takes $O(k)$. The procedure \textit{INSERT} takes $O(k)$.
\end{proof}

Now we show that our algorithm return a \textit{feasible} solution when $r \ge \rst$. We need the following technical lemma to show it.

\begin{lemma}
\label{lem:2point_in_ball}
	Let $B$ be a unit ball in $\bbbr^d$ for $d \ge 1$. There are no three point $p$, $q$, and $r$ such that the points are contained in $B$ and all of their pairwise distances are larger than $\sqrt{3}$.
\end{lemma}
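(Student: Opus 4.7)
The plan is a short algebraic argument by contradiction that works uniformly in all dimensions and avoids any case analysis on the triangle formed by the three points. Assume for contradiction that three points $p,q,r \in B$ satisfy $|pq|,|qr|,|rp| > \sqrt{3}$. First I would translate the ball so that its center is the origin; then, treating $p,q,r$ as vectors, their norms satisfy $|p|,|q|,|r|\le 1$, and each pairwise distance equals $|p-q|$, etc.

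The central step is to sum the three squared pairwise distances and rewrite them via inner products. The standard expansion $|p-q|^2=|p|^2-2\langle p,q\rangle+|q|^2$, summed over the three pairs, yields
\begin{equation*}
|p-q|^2+|q-r|^2+|r-p|^2 \;=\; 2\bigl(|p|^2+|q|^2+|r|^2\bigr)-2\bigl(\langle p,q\rangle+\langle q,r\rangle+\langle r,p\rangle\bigr).
\end{equation*}
To control the inner-product sum I would invoke the single inequality $|p+q+r|^2 \ge 0$, which after expansion rearranges to
\begin{equation*}
\langle p,q\rangle+\langle q,r\rangle+\langle r,p\rangle \;\ge\; -\tfrac{1}{2}\bigl(|p|^2+|q|^2+|r|^2\bigr).
\end{equation*}

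Substituting this lower bound into the identity above gives
\begin{equation*}
|p-q|^2+|q-r|^2+|r-p|^2 \;\le\; 3\bigl(|p|^2+|q|^2+|r|^2\bigr) \;\le\; 9.
\end{equation*}
But the contradiction hypothesis forces each of the three squared distances to exceed $3$, so their sum strictly exceeds $9$, which is the desired contradiction. I do not anticipate any real obstacle: the main insight is that the geometry of three points in a unit ball collapses to the one-line inequality $|p+q+r|^2 \ge 0$, which is why the argument is dimension-free and requires no reduction to the plane or case split on acute versus obtuse triangles.
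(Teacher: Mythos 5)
Your proof is correct, and it takes a genuinely different route from the paper. The paper reduces to the two-dimensional disk cut out by the plane through $p,q,r$ and argues geometrically: an equilateral triangle of side exactly $\sqrt{3}$ inscribed in a unit circle has all three vertices on the boundary, so three points with strictly larger pairwise distances cannot all fit inside; the reduction involves an informal ``move the three points so their pairwise distances become $\sqrt{3}$'' step that is left at the level of a picture. Your argument instead centers the ball at the origin and combines the identity
\begin{equation*}
|p-q|^2+|q-r|^2+|r-p|^2 \;=\; 2\bigl(|p|^2+|q|^2+|r|^2\bigr)-2\bigl(\langle p,q\rangle+\langle q,r\rangle+\langle r,p\rangle\bigr)
\end{equation*}
with the one-line inequality $|p+q+r|^2\ge 0$ to bound the sum of squared pairwise distances by $3\bigl(|p|^2+|q|^2+|r|^2\bigr)\le 9$, while the hypothesis would force it above $9$. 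The algebra checks out at every step. What your approach buys is full rigor with no planar reduction and no appeal to a figure, plus an immediate generalization (the sum of squared pairwise distances of $n$ points in a unit ball is at most $n^2$, which recovers the lemma as the case $n=3$ and would also yield the analogous bound used to cap the number of active centers per optimal ball for other parameters). What the paper's approach buys is geometric intuition about why $\sqrt{3}$ is the tight constant, namely that it is realized by the inscribed equilateral triangle; your proof also shows tightness implicitly (equality requires $|p|=|q|=|r|=1$ and $p+q+r=0$), but less visibly.
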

\begin{proof}
	In order to derive a contradiction, we assume that there exist such points $p$, $q$, and $r$. Then we choose the plane passing the points. Let $B'$ be the 2-dimensional ball intersection of $B$ and the plane. Note that the radius of $B'$ is at most $1$ and the convex hull of the points lies in $B'$. Then we can move three points such that their pairwise distances are $\sqrt{3}$ and those points are contained in the convex hull of the origin points. Let those points be $p'$, $q'$, and $r'$. Because $p'$, $q'$, and $r'$ are on the boundary of $B'$ when $B'$ is a unit ball, at least one of the original points lies outside of $B'$. See the Figure~\ref{fig:triangle}. This is a contradiction.
\end{proof}

By Lemma~\ref{lem:2point_in_ball}, the following lemma hold.

\begin{figure}[ht]
	\centering
	\includegraphics[width=0.25\textwidth]{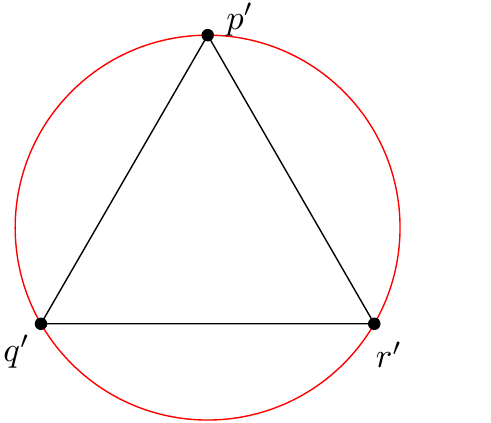}
	\caption{$p'$, $q'$, and $r'$ are on the boundary of $B'$.}
	\label{fig:triangle}
\end{figure}

\begin{lemma}
\label{lem:2ball_cotain}
	Let $B$ be a unit ball in $\bbbr^d$ for $d \ge 1$. For any points $c_1$ and $c_2$ in $B$ and $dist(c_1,c_2) > \sqrt{3}$. Then $B \subset B(c_1,\sqrt{3}) \cup B(c_2,\sqrt{3})$.  
\end{lemma}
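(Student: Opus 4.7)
The plan is to derive the containment by contradiction and to reduce the claim immediately to the preceding Lemma~\ref{lem:2point_in_ball}. Suppose there were some point $p \in B$ that lies outside both $B(c_1, \sqrt{3})$ and $B(c_2, \sqrt{3})$. Then by definition $|pc_1| > \sqrt{3}$ and $|pc_2| > \sqrt{3}$, and by the hypothesis we also have $|c_1 c_2| > \sqrt{3}$. So the three points $p$, $c_1$, $c_2$ lie in the unit ball $B$ and have all pairwise distances strictly greater than $\sqrt{3}$, which contradicts Lemma~\ref{lem:2point_in_ball}.

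Concretely, I would write the argument in three short steps: (i) assume for contradiction that the containment fails, i.e.\ pick a witness $p \in B \setminus (B(c_1,\sqrt{3}) \cup B(c_2,\sqrt{3}))$; (ii) verify that the triple $\{p,c_1,c_2\}$ satisfies the hypotheses of Lemma~\ref{lem:2point_in_ball} — all three points lie in $B$ (since $c_1,c_2 \in B$ by assumption and $p \in B$ by choice) and every pair has distance strictly larger than $\sqrt{3}$; (iii) invoke Lemma~\ref{lem:2point_in_ball} to obtain a contradiction, concluding that no such $p$ exists.

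I do not expect any substantial obstacle here: the statement is essentially a rephrasing of Lemma~\ref{lem:2point_in_ball} in a covering form, and the only thing that really needs to be checked is that the roles of the three points match the hypothesis of that lemma. The one minor point to be careful about is strict versus non-strict inequality — Lemma~\ref{lem:2point_in_ball} rules out pairwise distances strictly greater than $\sqrt{3}$, which is exactly what we get from the negation of $p \in B(c_i,\sqrt{3})$ (assuming we take the closed ball $B(c_i,\sqrt{3}) = \{x : |xc_i| \le \sqrt{3}\}$). As long as the convention for $B(c,r)$ is the closed ball, the contradiction goes through verbatim; otherwise a trivial boundary argument (replace $\sqrt{3}$ by $\sqrt{3} + \delta$ for small $\delta > 0$ and let $\delta \to 0$, or simply note that the boundary case $|pc_i| = \sqrt{3}$ already puts $p$ in $B(c_i,\sqrt{3})$) closes the gap.
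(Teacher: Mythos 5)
Your proof is correct and follows exactly the route the paper intends: the paper gives no written proof beyond the remark ``By Lemma~\ref{lem:2point_in_ball}, the following lemma hold,'' and your contradiction argument (a point of $B$ outside both balls would form, together with $c_1$ and $c_2$, three points in the unit ball with all pairwise distances exceeding $\sqrt{3}$) is precisely the one-line deduction being invoked. Your remark about closed versus open balls is a reasonable bit of care, and the paper's convention for $B(c,r)$ is consistent with the closed-ball reading you adopt.
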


Now we show that our algorithm maintains a \textit{feasible} coreset when $r \ge \rst$.

\begin{lemma}
\label{lem:feasible}
	If $r \ge \rst$, then $4k$CoreSet($r$) maintains a \textit{feasible} coreset.	
\end{lemma}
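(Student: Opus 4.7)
The plan is to establish feasibility in two steps: (i) show that the force-delete branch of the INSERT procedure never fires, so the FT/CT bookkeeping never flags the coreset as infeasible, and (ii) show the geometric coreset condition, that every alive point has some $y \in R$ within distance $2\sqrt{3}r$.

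For (i), I would argue by contradiction. Suppose at some moment INSERT executes its $|A|=2k$ branch with $D=\emptyset$, force-deleting the oldest active center. Let $A=\{a_1,\ldots,a_{2k}\}$ in insertion order and let $p$ be the incoming point. All of $a_1,\ldots,a_{2k}$ are currently alive because the main loop removes expired actives from $A$ before INSERT is called, and $p$ is alive. The emptiness of $D$ gives $|pa_i|>\sqrt{3}r$ for every $i$. Furthermore $|a_ia_j|>\sqrt{3}r$ for all $i<j$: a stream point never re-enters $A$ after leaving, so $a_i$ sits in $A$ continuously from its insertion until now; in particular $a_i \in A$ when $a_j$ was inserted, and $a_j$ was added only because it lay at distance $>\sqrt{3}r$ from every then-active center. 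Hence $\{p\}\cup A$ is a set of $2k+1$ alive points with pairwise distance $>\sqrt{3}r$. Any optimal $k$-center clustering of the current alive set covers these $2k+1$ points by $k$ balls of radius $\rst \le r$, so by pigeonhole some ball of radius $r$ contains three of them, contradicting Lemma~\ref{lem:2point_in_ball} scaled to radius $r$. Consequently FT is never updated and stays $\le$ CT throughout.

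For (ii), fix any alive point $x$ and trace how it was processed at its own insertion. If $D$ was empty at that moment, $x$ was added to $A$ with $R(x):=x$; because no force-delete ever occurs and $x$ has not expired, $x$ is still in $A$, and the current $R(x)$ is either $x$ itself or a younger matched point $z$ with $|xz|\le\sqrt{3}r$ (alive because $\textit{Time}(z)\ge\textit{Time}(x)$), so $R(x)\in R$ and within $\sqrt{3}r$ of $x$. Otherwise some $a\in D$ satisfied $|xa|\le\sqrt{3}r$ and received $R(a):=x$. Let $y$ denote the last point ever assigned to $R(a)$; by construction $y$ matched $a$, so $|ya|\le\sqrt{3}r$, and $\textit{Time}(y)\ge\textit{Time}(x)$, so $y$ is alive. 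If $a$ is still in $A$ then $y\in R$; if $a$ has since left $A$, the only possibility (given no force-delete) is natural expiration, and DeleteActive called on a naturally expired $a$ executes only $A:=A\setminus\{a\}$ because its ``if $a$ is not deleted'' guard fails, leaving $R$ untouched. Either way $y\in R$ and $|xy|\le|xa|+|ay|\le 2\sqrt{3}r$.

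The main obstacle is the bookkeeping in step (ii): one has to follow which representative survives through an arbitrary interleaving of stream insertions, natural expirations, and the (hypothetical) force-delete purges of $R$. The clean observation unlocking the argument is that, once step (i) rules out force-deletes, the only way $R$ can lose a point is by that specific point naturally expiring, while the last assignment of $R(a)$ for any active center $a$ that was matched by an alive point $x$ is always no older than $x$, hence is alive exactly when $x$ is.
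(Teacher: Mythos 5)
Your core argument is the paper's: the only event that can make the coreset infeasible is a force-removal of a still-alive active center, and such a configuration produces $2k+1$ alive points with pairwise distances greater than $\sqrt{3}r$, which by pigeonhole places three of them in a single optimal ball of radius $\rst\le r$, contradicting Lemma~\ref{lem:2point_in_ball}. Your part (ii) is in fact more explicit than the paper, which only asserts the coverage property in the surrounding prose.

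There is, however, a genuine gap in the temporal anchoring of part (i). You derive the contradiction at the moment the force-delete branch would fire, using the optimal $k$-center radius of the window \emph{at that moment}. In the sliding-window model $\rst$ fluctuates: the window may earlier have contained a configuration with a much larger optimal radius, during which a force-delete for the present estimate $r$ legitimately occurred, and only later did the optimal radius drop below $r$. The lemma, as it is used in Theorem~\ref{thm:main}, must guarantee feasibility at query time assuming only $r\ge\rst$ for the \emph{current} window; your argument only rules out force-deletes at instants where $r$ already dominated the then-current optimal radius, so the conclusion ``the force-delete branch never fires, hence $R$ is never purged'' does not follow from the hypothesis. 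The paper's proof instead anchors the contradiction at the current time: if the most recently force-removed center $a^{*}$ is still alive now, then the $2k$ centers present in $A$ at its removal are all younger than $a^{*}$, hence alive and still in $A$ now, giving $2k+1$ currently alive points that are pairwise more than $\sqrt{3}r$ apart, contradicting $r\ge\rst$ for the current window; hence $FT\le CT$ holds now even if force-deletes occurred in the past. Your part (ii) then needs one additional case: representatives may have been purged by those past force-deletes, but every purged representative, and every point it covered, is older than the purge-triggering center, which by the above has already expired, so no currently alive point loses its witness. With these two adjustments your proof becomes a complete (and more detailed) version of the paper's.
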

\begin{proof}
	In $4k$CoreSet($r$), we maintain at most $2k$ active center points in $A$. Assume we currently have $A = \{ a_i, a_{i+1}, \cdots, a_{i-1+m}  \}$ where $m \le 2k$.
	
	If $a_{i-1}\in P_N$, then this lemma holds.
	If $a_{i-1}\notin P_N$, then we will show it is impossible when $r \ge \rst$. In order to derive a contradiction, we assume that $a_{i-1}$ is alive.
	Since $a_{i-1}$ is alive, we have at least $2k+1$ active center points whose pairwise distance is larger than $\sqrt{3}r$. By Lemma~\ref{lem:2ball_cotain} and the distances between the points, at most two points of the points lie in an optimal ball. Therefore the number of the points are at most $2k$, but we have more than $2k$, it is a contradiction.
\end{proof}

Now we explain the way to maintain estimates. We use a similar way as in Section~\ref{section:1-center}.
We maintain estimates, which is an exponential sequence to the base of ($1+\epsilon$), such that any value between the distance of the closest pair and the distance of the diameter is ($1+\epsilon$) approximated. For each power of ($1+\epsilon$), we run \textit{4kCoreSet}$(r)$. We choose the smallest $r$ that returns a feasible solution.

Let $r_L$ ($r_U$) be an estimate in $\Upsilon$ such that, for any estimate $r  \le r_L$ ($r \ge r_U$), \textit{4kCoreSet}$(r)$ and \textit{4kCoreSet}$(r_L)$ (\textit{4kCoreSet}$(r_U)$) maintain the same coreset.
To bound the number of estimates, we use the same idea as in Section~\ref{section:1-center}. Let $\Upsilon$ be an estimates set containing $(1+\eps)^i $ for all integer $i$. Let $r_L$ ($r_U$) be an estimate in $\Upsilon$ such that for any estimate $r  \le r$ ($r \ge r$) \textit{4kCoreSet}$(r_L)$ (\textit{4kCoreSet}$(r_U)$) and \textit{4kCoreSet}$(r)$ maintain the same coreset.
We set $r_L < \frac{|CP(P_{2k+1})|}{\sqrt{3}}$ to the witness.
For any estimate $r \le r_L$, balls $B(p_j,\sqrt{3}r)$ are disjoint where $n-2k\le j \le n$,
therefore $R = A = \{p_{n-2k+1}, p_{n-2k+1}, \ldots , p_n \}$ is an \textit{infeasible} coreset until $\textit{Time}(p_{n-2k})$.
We set $r_U > \frac{2|\textit{dia}(P_N)|}{\sqrt{3}}$ to the witness. Because $B(p_n,\sqrt{3} r_U)$ contains all points in $P_N$, we maintain $A=R= \{ p_n \}$ as the coreset for $r \ge r_U$. We set $r_U = \frac{6(1+\epsilon)\gamma_{e}}{\sqrt{3}}$ where $\gamma_{e}$ is the estimate of the solution of our $1$-center problem.

Combining these lemmas and ideas we have :

\begin{theorem}
\label{thm:main}
Given a set of points $P$ with a window of size $N$ and a value $\eps > 0$, our algorithm maintains at most $4k$ points $R$ such that for any point $x \in P_N$ there is a point $y \in R$ such that $dist(x,y) \le (2\sqrt{3}+\eps)\rst$. Our algorithm  maintains O$(k \log(\alpha)/\epsilon)$ points and requires O$(k\log k + k\log (\alpha)/\epsilon)$ update time per point.
To compute a $(c+2\sqrt{3}+\epsilon)$-approximate solution to the Euclidean $k$-center problem we need a $c$-approximation solution of the $k$-center problem algorithm for O$(k)$ points.
\end{theorem}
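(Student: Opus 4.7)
The plan is to assemble Theorem~\ref{thm:main} out of three pieces that are already in hand: the per-instance space and time bound of Lemma~\ref{lem:store_k}, the feasibility guarantee of Lemma~\ref{lem:feasible}, and the exponential-estimate machinery transplanted almost verbatim from Section~\ref{section:1-center}. Concretely, I would keep one copy of \textit{4kCoreSet}$(r)$ alive for every $r = (1+\epsilon)^i$ with $r_L \le r \le r_U$, where $r_L < |\textit{CP}(P_N)|/\sqrt{3}$ and $r_U = 6(1+\epsilon)\gamma_e/\sqrt{3}$ are the witnesses chosen in the preceding discussion, and at query time report the coreset $R$ produced by the smallest estimate whose instance is currently flagged feasible (that is, whose $\textit{FT}$ does not exceed $\textit{CT}$).

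For the covering-radius claim, let $r^{\dagger}$ be the smallest estimate in $\Upsilon$ with $r^{\dagger} \ge r^*$, so $r^{\dagger} \le (1+\epsilon)\,r^*$. Lemma~\ref{lem:feasible} makes \textit{4kCoreSet}$(r^{\dagger})$ feasible, and therefore the estimate $r'$ actually reported satisfies $r' \le r^{\dagger} \le (1+\epsilon)\,r^*$. By the definition of feasibility, every alive $x$ has some $y \in R$ with $|xy| \le 2\sqrt{3}\,r' \le (2\sqrt{3}+2\sqrt{3}\epsilon)\,r^*$; rescaling $\epsilon$ yields the stated $(2\sqrt{3}+\epsilon)\,r^*$ bound. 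For the overall approximation factor, apply any $c$-approximation $k$-center routine to the at-most-$4k$ points of $R$, obtaining balls of radius $\rho$. Since $R \subseteq P_N$, the optimum $k$-center radius on $R$ is at most $r^*$, so $\rho \le c\,r^*$; one triangle inequality then places every alive point within $\rho + (2\sqrt{3}+\epsilon)\,r^* \le (c+2\sqrt{3}+\epsilon)\,r^*$ of some returned center.

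The resource bounds follow by multiplication. Lemma~\ref{lem:store_k} gives $O(k)$ space and $O(k)$ update work per instance. The choice of $r_L$ and $r_U$ sandwiches the active estimates inside a multiplicative range of order $\alpha$, so there are $O(\log_{1+\epsilon}\alpha) = O(\log \alpha/\epsilon)$ of them, giving $O(k\log \alpha/\epsilon)$ total space and $O(k\log \alpha/\epsilon)$ total coreset-maintenance work per inserted point. The extra additive $O(k \log k)$ term in the update time absorbs the (amortized) cost of routine bookkeeping that is not tied to the number of estimates, in particular ordering the at-most-$2k$ active centers of each instance for the oldest-center query and invoking the Feder--Greene $O(n\log k)$ routine on the $O(k)$ coreset points when a solution is required.

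The main obstacle, mirroring the one-center case in Section~\ref{section:1-center}, is the stability claim needed to bound the number of active estimates: for every $r \le r_L$ the state of \textit{4kCoreSet}$(r)$ should coincide with that of \textit{4kCoreSet}$(r_L)$, and symmetrically for $r \ge r_U$. The lower-range case requires arguing that once $\sqrt{3}\,r < |\textit{CP}(P_N)|$, every arriving point is at distance $> \sqrt{3}\,r$ from every prior active center and hence becomes a new active center itself, so that eviction is driven purely by the $|A| = 2k{+}1$ rule independently of $r$; the upper-range case reduces to observing that $B(p_n, \sqrt{3}\,r)$ already absorbs all alive points, so a single active center suffices. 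Once these two stability properties are proved by induction on the stream, combining them with Lemmas~\ref{lem:store_k}, \ref{lem:2ball_cotain}, and \ref{lem:feasible} closes the theorem.
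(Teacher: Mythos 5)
Your overall architecture is the same as the paper's: one instance of \textit{4kCoreSet}$(r)$ per estimate $(1+\eps)^i$ in a range $[r_L,r_U]$, feasibility of every estimate $r\ge\rst$ from Lemma~\ref{lem:feasible}, the $O(k)$ per-instance cost from Lemma~\ref{lem:store_k}, and the smallest feasible estimate reported at query time. Your handling of the final approximation factor is in fact a little cleaner than the paper's: you bound the radius $\rho$ of the $c$-approximation on $R$ by $c\rst$ directly from $R\subseteq P_N$, whereas the paper bounds it by $c\cdot r$ and folds the $(1+\eps)$ slack in less carefully.

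The genuine gap is the lower witness. You take $r_L < |\textit{CP}(P_N)|/\sqrt{3}$, but the closest pair of the entire window cannot be maintained with the claimed $O(k\log(\alpha)/\eps)$ points: it depends on all $N$ alive points and jumps when a close pair expires, so neither finding $r_L$ nor detecting when an estimate crosses your threshold is implementable within the stated resources. The paper instead uses $r_L < |\textit{CP}(P_{2k+1})|/\sqrt{3}$, the closest pair among only the last $2k+1$ points. This still supports the stability argument (if the last $2k+1$ points are pairwise more than $\sqrt{3}r$ apart, each arrival becomes its own active center and evicts the oldest, so $A=R$ is the set of the last $2k$ points for every such $r$), it still satisfies $|\textit{CP}(P_{2k+1})|/\sqrt{3}\le\rst$ by Lemma~\ref{lem:2point_in_ball} (three of the $2k+1$ points share an optimal ball), and, crucially, it can be estimated online: $2k+1$ points in $2k$ balls put the closest pair alone in one ball, so a $2$-approximate $2k$-center computation on $P_{2k+1}$ via Feder--Greene brackets $|\textit{CP}(P_{2k+1})|$ within a factor $2$ in $O(k\log k)$ time per point. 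That computation is the actual source of the additive $O(k\log k)$ in the update time, which you instead attribute to ordering active centers and to the query-time clustering of the coreset (the latter is explicitly excluded from the update cost by the theorem statement). Replacing $P_N$ by $P_{2k+1}$ in your witness and stability induction repairs the argument; the rest goes through as written.
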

\begin{proof}
By Lemma~\ref{lem:feasible}, $4k$CoreSet($r$) returns a \textit{feasible} coreset when $r \ge \rst$. Now we show that our algorithm is a $(c+2\sqrt{3}+\epsilon)$-approximation when $\rst < r \le (1+\frac{\eps}{2\sqrt{3}})\rst$. Our algorithm gives a set $R$ of at most $4k$ points. For any alive point $p$, there exists a point $x \in R$ such that $dist(p,x) \le 2\sqrt{3}r < 2\sqrt{3}(1+\frac{\eps}{2\sqrt{3}})\rst = (2\sqrt{3}+\eps)\rst$ by lemma~\ref{lem:feasible}. Let $C = \{ c_1, c_2, \ldots, c_k \}$ be $c$-approximation solution for $k$-center problem for $R$. For any alive point $x$, the distance from its closest center in $C$ is at most $(c+ 2\sqrt{3})r$.
Therefore, the union of balls $B_i = B(c_i, (c+ 2\sqrt{3})r)$ $1 \le i \le k$ contains all points in $P_N$ and it is a $(c+2\sqrt{3}+\epsilon)$-approximation.

For finding $r_L$, we use a $2$-approximation of the $2k$-center problem for $P_{2k+1}$. Note that the closest pair are in an optimal ball and other optimal balls contain exactly one point. Therefore, the radius is between  $|CP(P_{2k+1})|/2$ and $|CP(P_{2k+1})|$. It takes O($k \log k$) time by the algorithm of Feder and Greene~\cite{feder-greene-1988}.

The memory usage of the algorithm consists of O($k$) per instance of \textit{4kCoreSet}$(r)$ and $\log_{1+\eps}\alpha = \frac{\ln \alpha}{\ln (1+\eps)} \le \frac{2}{\eps}  \ln \alpha$ estimates.

$4k$CoreSet($r$) takes $O(k)$ update time per point and $O(\frac{2}{\eps}  \log \alpha)$ estimates. To bound the number of estimate it takes $O(k \log k)$ time per point.
\end{proof}
\section{The k-Center Problem in Low Dimension}
\label{section:k-center_low}
In this section, we have the following theorem. The details of this section can be found in the Appendix C. 

\begin{theorem}
\label{thm:lowD_doubling}
Given a set of points $P$ with a window of size $N$ and a value $\eps > 0$, our algorithm maintains $O(2^{(c_dd)t})$ points $R$ such that for any point $x \in P_N$ there is a point $y \in R$ such that $dist(x,y) \le (\sqrt{3}/2^{t-1}+\eps)\rst$. Our algorithm  maintains O$(k2^{c_d d t} \log(\alpha)/\epsilon)$ points and requires O$(k \log k + k2^{c_d d t} \log(\alpha)/\epsilon$) update time per point.
To compute a $(c+2\sqrt{3}(\frac{1}{2})^t+\epsilon)$-approximation to the Euclidean $k$-center problem we need a $c$-approximation of the $k$-center problem algorithm for the coreset.
\end{theorem}

\renewcommand{\abstractname}{\ackname}

\begin{abstract}
  We thank all anonymous referees for their valuable input.
We thank Rolf Klein, David Kübel, Elmar Langetepe and Barbara Schwarzwald for their discussions.
\end{abstract}

\footnotesize

\bibliographystyle{abbrv}
\bibliography{EKCenterSW}
%
%
%
%

\newpage
\appendix
\section*{Appendices}
\renewcommand{\thesubsection}{\Alph{subsection}}

\subsection{The $(6+\epsilon)$-approximation for $1$-center problem}
First we give a $(6+\epsilon)$-approximation by using the algorithm for the diameter problem.
The metric diameter problem is to find two points of maximum distance among a set of points lying in some metric space.
The algorithm of Cohen-Addad et al.~\cite{cohen-2016} returns two points $p$ and $q$ such that their distance is at least $\frac{(1-\epsilon)}{3}|\textit{Dia}(P_N)|$. We choose the last point as the center and $3|pq|$ as the radius. We give an example that gives $(6+\epsilon)$-approximation. See Figure~\ref{fig:onecenter} (a). In the figure, the length of the diameter is $2r^*$.

\subsection{Proofs of lemmas in Section~\ref{section:1-center}}
\begin{lemma}[Invariant 3]
\label{inv:3}
If $c_{new} = null$ then the following statement holds:	\\
There exists a \textit{bridge} point $b$ such that $|xb| \le \gamma$ for any alive point $x$ with $\textit{Time}(x) \le \textit{Time}(c_{old})$.
\end{lemma}
\begin{proof}
We consider the situation when Invariant 3 holds and a new point $p$ is inserted. We will show that Invariant 3 is satisfied after the update.
Let $c'_{old}$ ($c'_{new}$) be the point $c_{old}$ ($c_{new}$) before $p$ is inserted. Let $r'$, $q'$ and $b'$ be defined similar way.

There are 3 cases when $c_{new}=null$ and $b \neq b'$. \\
Case 1 : $c'_{new} = null$ and $c'_{old}$ is deleted. 		\\
Case 2 : $c'_{new} \neq null$ and $c'_{old} = q$, and $c'_{old}$ is deleted.	\\
Case 3 : $c'_{new} \neq null$ and $c'_{old} \neq q$, and $c'_{old}$ is deleted.

We explain how to update the bridge point $b$ for each case. Note that for any case $c'_{old}$ is deleted.\\
\textbf{Case 1)} In this case, their algorithm sets $c_{old} = r'$. We need to show that for any point $x$ with $\textit{Time}(c'_{old}) < \textit{Time}(x) \le \textit{Time}(r')= \textit{Time}(c_{old})$, there is a point $b$ such that $|xb| \le \gamma$. By Invariant 1.b), for any point $x$ with $\textit{Time}(c'_{old}) < \textit{Time}(x) \le \textit{Time}(r')$, we have $|c'_{old}x| \le \gamma$. We set $b = c'_{old}$.	\\
\textbf{Case 2)} In this case, their algorithm sets $c_{old} = r'$. We need to show that for any point $x$ with $\textit{Time}(c'_{old}) < \textit{Time}(x) \le \textit{Time}(r')= \textit{Time}(c_{old})$, there is a point $b$ such that $|xb| \le \gamma$. Since $c'_{old} = q$ and $\textit{Time}(q) = \textit{Time}(c'_{new})-1$, $c'_{new}$ is the oldest alive point. And, by Invariant 2.c), for any point $x$ with $\textit{Time}(c'_{new}) \le \textit{Time}(x) \le \textit{Time}(r')$, we have $|xc'_{new}| \le \gamma$. We set $b = c'_{new}$.	\\
\textbf{Case 3)} In this case, their algorithm sets $c_{old} = q'$. We need to show that for any point $x$ with $\textit{Time}(c'_{old}) < \textit{Time}(x) \le \textit{Time}(q') < \textit{Time}(c'_{new})$, there is a point $b$ such that $|xb| \le \gamma$. By Invariant 2.b), for any point $x$ with $\textit{Time}(c'_{old}) < \textit{Time}(x) = \textit{Time}(c_{old}) < \textit{Time}(c'_{new})$, we have $|xc'_{old}| \le \gamma$. We set $b = c'_{old}$.
\end{proof}

\begin{lemma}
\label{small_ball}
If $c_{new} = null$ then $B(m,1.5\gamma)$ contains $P_N$ where $m = \frac{c_{old}+b}{2}$.
\end{lemma}
\begin{proof}
By Invariant 3, we have a \textit{bridge} point $b$, for any alive point $x$ that is older than $c_{old}$, satisfying $|xb| \le \gamma$. By Invariant 1, $|xc_{old}| \le \gamma$ for any $x$ with $Time(x) > Time(c_{old})$. $|bc_{old}| \le \gamma$ by Invariant 3. We choose the mid point $m$ of $b$ and $c_{old}$. Then for any alive point $x$ $|xm| \le 1.5 \gamma$. See Figure~\ref{fig:onecenter} (b).
\end{proof}

\begin{lemma}
	If $\gamma \le |p_{n-1}p_n|$, then $\textit{Diameter}(\gamma)$ maintains $c_{old},r,q \gets p_{n-1}$ and $c_{new} \gets p_n$. It returns $c_{old}$ and $c_{new}$ as a solution.
	If $\gamma > |\textit{dia}(P_N)|$, then we can maintain $c_{old}$, $r$, $q 
\gets p_{n-1}$, $c_{new} \gets null$ , and $b \gets p_{n-1}$. We return $c_{old}$ as a solution. All the invariants hold for the solution.
\end{lemma}
\begin{proof}
	First we prove the case when $\gamma \le |p_{n-1}p_n|$.
	After updating $p_{n-1}$, we set $r$ to $p_{n-1}$ by line 14 of Algorithm 1. When $p_n$ is inserted our algorithm call \textit{Insert}$(p)$. In the procedure, whether $c_{new}$ is null or not, $|pr| > \gamma$. So we set $c_{old}$ and $q$ to $r$ and $c_{new}$ to $p$. Since $c_{new}$ is not null, we do not maintain $b$. In line 14, we set $r$ to $p_n$.

When $\gamma > |\textit{dia}(P_N)|$, we set $c_{old}$, $r$ and $q$ to $p_{n-1}$, $c_{new}$ to null, and $b$ to $p_{n-1}$. Note that Invariant 1 and Invariant 3 hold because $\gamma > |\textit{dia}(P_N)|$.
\end{proof}

\subsection{The k-Center Problem in Low Dimension}


In this section, we explain a better approximation when the dimension of Euclidean space is low. The idea is similar to Section~\ref{section:k-center}. Our algorithm consists of two parts: a fixed parameter algorithm and a way to maintain parameters. We focus on the fixed paprmeter algorithm.
We implicitly maintain small balls that contain all alive point. 
For a given radius of balls, we will bound the number of balls and approximation factor of our algorithm by using a property of doubling metric space.

The doubling dimension~\cite{assouad-1983,heinonen-2001} of a metric space is the smallest $D$ such that every ball of radius $r$ is covered by $2^D$ balls of radius most $r/2$. It is well-known that a point set in $d$-dimensional Euclidean metric has doubling dimension $\Theta (d)$. In~\cite{verger-2005}, Jean-Louis give a constant $c_d$ ( equation (4) in Theorem 1.2) such that $D \le c_dd$.




\begin{table}[ht]
\caption{Relation between the approximation factor and the number of balls} 
\centering 
\begin{tabular}{c c c } 
\hline\hline 
$t$     & APX 			& balls \\ [0.5ex] 
\hline 
1 		& $2.733+\eps$	& O($2^{c_d d}$) 		\\
3 		& $1.434+\eps$	& O($2^{3c_d d}$) 		\\
5 		& $1.109+\eps$	& O($2^{5c_d d}$) 		\\
8 		& $1.014+\eps$	& O($2^{8c_d d}$) 		\\
10 		& $1.004+\eps$	& O($2^{10c_d d}$) 		\\ 
$\log \sqrt{3} - \log \epsilon + 1$ & $1+\epsilon$ & O($2^{(\log \sqrt{3} - \log \epsilon + 1)c_d d}$)												\\[1ex] 
\hline 
\end{tabular}
\label{table:nonlin} 
\end{table}


We will use the property of doubling metric space in the following way.
A unit ball is contained in $2^{c_dd}$ balls with radius $1/2$. We can use this idea recursively. A ball with radius $1/2$ is contained in $2^{c_dd}$ balls with radius $(1/2)^2$. Therefore $2^{(c_dd)t}$ balls with radius $(1/2)^t$ contain a unit ball.
By lemma~\ref{lem:2ball_cotain}, a ball with radius $(1/2)^t$ is contained in the union of two balls with radius $\sqrt{3}(1/2)^t$.


The basic idea of our algorithm is similar to our algorithm in Section~\ref{section:k-center}. The difference is that we maintain a set $A$ of at most $2^{(c_dd)t}$ active center points and the distance between an active center point and its representative point is at most $\sqrt{3}(1/2)^tr$.
If $\rst < r \le (1+\eps) \rst$, then our algorithm returns \textit{feasible} solution and it guarantees a $(c+2\sqrt{3}(\frac{1}{2})^t+\epsilon)$-approximation by using $c$-approximation for the coreset. We maintain estimates with the similar way in Section~\ref{section:1-center}.
To bound the number of estimates, we set $r_L < \frac{|CP(P_{2^{(c_dd)t}+1})|}{\sqrt{3}}$ and $r_U \ge \frac{2|dia(P_N)|}{\sqrt{3}}$.

\begin{theorem}
\label{thm:lowD_doubling}
Given a set of points $P$ with a window of size $N$ and a value $\eps > 0$, our algorithm maintains $O(2^{(c_dd)t})$ points $R$ such that for any point $x \in P_N$ there is a point $y \in R$ such that $dist(x,y) \le (\sqrt{3}/2^{t-1}+\eps)\rst$. Our algorithm  maintains O$(k2^{c_d d t} \log(\alpha)/\epsilon)$ points and requires O$(k \log k + k2^{c_d d t} \log(\alpha)/\epsilon$) update time per point.
To compute a $(c+2\sqrt{3}(\frac{1}{2})^t+\epsilon)$-approximation to the Euclidean $k$-center problem we need a $c$-approximation of the $k$-center problem algorithm for the coreset.
\end{theorem}

To get $(1+\epsilon)$-approximation, inequality $(1/2)^t \sqrt{3} \le \epsilon/2$ holds and we get $t \ge \log \sqrt{3} - \log \epsilon + 1$ by modifying the inequality.




\end{document}